\documentclass[11pt]{article}%
\usepackage{amsmath}
\usepackage{amsfonts}
\usepackage{amssymb}
\usepackage{graphicx}
\usepackage{fullpage}
\usepackage{hyperref}%
\setcounter{MaxMatrixCols}{30}
\providecommand{\U}[1]{\protect\rule{.1in}{.1in}}
\newtheorem{theorem}{Theorem}

\newtheorem{definition}[theorem]{Definition}

\newenvironment{proof}[1][Proof]{\noindent\textbf{#1.} }{\ \rule{0.5em}{0.5em}}
\begin{document}

\title{$\mathsf{PDQP/qpoly}=\mathsf{ALL}$}
\author{Scott Aaronson\thanks{University of Texas at Austin. \ Email:
aaronson@cs.utexas.edu. \ Supported by a Vannevar Bush Fellowship from the US
Department of Defense, a Simons Investigator Award, and the Simons
\textquotedblleft It from Qubit\textquotedblright\ collaboration.}}
\date{}
\maketitle

\begin{abstract}
We show that combining two different hypothetical enhancements to quantum
computation---namely, quantum advice and non-collapsing measurements---would
let a quantum computer solve any decision problem whatsoever in polynomial
time, even though neither enhancement yields extravagant power by itself.
\ This complements a related result due to Raz. \ The proof uses locally
decodable codes.

\end{abstract}

We've known for a quarter-century that quantum computers could efficiently
solve a few problems, like factoring and discrete logarithms, that have
resisted sustained efforts to solve them classically \cite{shor}. \ But we've
also known that the tools used to prove this don't generalize, for example, to
$\mathsf{NP}$-complete problems \cite{bbbv}. \ At least in the black-box
setting, even a quantum computer would provide at most a quadratic speedup
(i.e., the speedup of Grover's algorithm \cite{grover}) for unordered search,
and it would face similar limits for many other tasks.

This situation has motivated some researchers to consider speculative
generalizations of known physics, which would dramatically boost quantum
computers' power.\ \ In 1998, Abrams and Lloyd \cite{al}\ showed that a
nonlinear term in the Schr\"{o}dinger equation, if one existed, generally
\textit{would} let quantum computers solve $\mathsf{NP}$-complete\ and even
harder problems in polynomial time. \ Others (e.g., \cite{bacon,awat}) pointed
out similar superpowers in quantum computers equipped with closed timelike curves.

Perhaps it's no surprise that doing violence to quantum-mechanical linearity
in these ways would yield inordinate computational power. \ What's more
surprising is that there are hypothetical resources that appear to boost the
power of quantum computers, \textit{but only by a little}, rather than by
\textquotedblleft absurd\textquotedblright\ amounts. \ This note is concerned
with perhaps the two main examples of such resources: \textit{quantum advice}
and \textit{non-collapsing measurements}. \ We now discuss them in
turn.\bigskip

\textbf{Quantum Advice.} \ In 2003, Nishimura and Yamakami \cite{ny}\ defined
the class $\mathsf{BQP/qpoly}$, consisting of all decision problems solvable
by a polynomial-time quantum algorithm that's given a \textit{quantum advice
state} $\left\vert \psi_{n}\right\rangle $ with $n^{O\left(  1\right)  }$
qubits. \ The advice state depends only on the input length $n$, rather than
on the specific input $x\in\left\{  0,1\right\}  ^{n}$, but can otherwise be
chosen arbitrarily to help the algorithm. \ It's natural to wonder \textit{how
much it can help} to be given a fixed state that encodes exponentially many
complex numbers, albeit not in directly measurable form. \ More formally: does
$\mathsf{BQP/qpoly}$ equal $\mathsf{BQP/poly}$, which is the same class except
that the advice is now restricted to being classical?

Watrous \cite{watrous} gave an example of a problem for which quantum advice
seems to help. \ Given a finite group $G$, each of whose elements is uniquely
encoded by an $n$-bit string, as well as a fixed subgroup $H\leq G$ (and the
ability to perform group operations), suppose we want to decide whether an
input element $x\in G$\ belongs to $H$. \ Watrous showed that a quantum
computer can solve this problem in polynomial time, for \textit{any} $x\in G$,
if given the advice state%
\[
\left\vert H\right\rangle :=\frac{1}{\sqrt{\left\vert H\right\vert }}%
\sum_{h\in H}\left\vert h\right\rangle ,
\]
by estimating the overlap between $\left\vert H\right\rangle $\ and the coset
state $\left\vert Hx\right\rangle $ (which can be efficiently created given
$\left\vert H\right\rangle $). \ It's currently unknown how to solve the
problem without such an advice state. \ Meanwhile, Aaronson and Kuperberg
\cite{ak} showed that there exists a \textquotedblleft quantum
oracle\textquotedblright\ relative to which $\mathsf{BQP/poly}\neq
\mathsf{BQP/qpoly}$.\footnote{However, they also showed that Watrous's group
membership problem does \textit{not} lead to an oracle separation.}

Conversely, we also know significant limits on the power of quantum
advice.\ \ In 2004, Aaronson \cite{aar:adv} showed that $\mathsf{BQP/qpoly}%
\subseteq\mathsf{PostBQP/poly}$, where $\mathsf{PostBQP}$\ means quantum
polynomial-time enhanced by the ability to \textit{postselect} (or condition)
on exponentially unlikely measurement outcomes, and is known to equal the
classical complexity class $\mathsf{PP}$ \cite{aar:pp}. \ In 2010, Aaronson
and Drucker \cite{adrucker}\ improved this to $\mathsf{BQP/qpoly}%
\subseteq\mathsf{QMA/poly}\cap\mathsf{coQMA/poly}$ where $\mathsf{QMA}%
$\ (Quantum Merlin-Arthur) is a quantum analogue of $\mathsf{NP}$. \ These
results imply, by a counting argument, that there must be at least some
languages \textit{not} in $\mathsf{BQP/qpoly}$,\ which is not immediate from
the definition! \ As we'll see, there are other complexity classes
$\mathcal{C}$\ for which $\mathcal{C}\mathsf{/qpoly}$\ \textit{does} contain
all languages.

As a corollary of his so-called \textit{direct product theorem} for quantum
search, Aaronson \cite{aar:adv} also showed that there exists an oracle
relative to which $\mathsf{NP}\not \subset \mathsf{BQP/qpoly}$. \ This means
that, in the black-box setting, even quantum advice would not let quantum
computers solve $\mathsf{NP}$-complete problems in polynomial time.\bigskip

\textbf{Non-Collapsing Measurements.} \ In 2014, Aaronson et al.\ \cite{abfl}%
\ defined the class $\mathsf{PDQP}$ (Product Dynamical Quantum
Polynomial-Time), consisting of all decision problems solvable by
polynomial-time quantum algorithms with a hypothetical ability to make
\textit{multiple non-collapsing measurements} of a quantum state. \ In other
words, they considered quantum circuits that, besides $1$- and $2$-qubit
unitary gates, are equipped with two kinds of measurements:

\begin{enumerate}
\item[(i)] \textquotedblleft ordinary\textquotedblright\ measurements, which
collapse the state being measured according to the usual quantum-mechanical
rules, \textit{and also}

\item[(ii)] \textquotedblleft non-collapsing\textquotedblright\ measurements,
which return an independent sample from the appropriate output distribution
every time they're applied, yet leave the state unaffected and ready to be
measured again.
\end{enumerate}

Here Aaronson et al.\ \cite{abfl} were building on 2005 work by Aaronson
\cite{aar:qchv}, who studied the power of quantum algorithms enhanced by the
hypothetical ability to inspect the entire history of a \textit{hidden
variable} (as in Bohmian mechanics). \ This led him to define a complexity
class called $\mathsf{DQP}$ (Dynamical Quantum Polynomial-Time), which
contains $\mathsf{PDQP}$\ and is closely related to it. \ However, the later
work on $\mathsf{PDQP}$\ separated out the core complexity-theoretic issues
from the technical details of hidden-variable theories, and also fixed an
error that Aaronson \cite{aar:qchv}\ had made.\footnote{Specifically, Aaronson
\cite{aar:qchv}\ claimed to show that $\mathsf{NP}^{A}\not \subset
\mathsf{DQP}^{A}$\ relative to a suitable oracle $A$; in reality he showed no
such thing, though the conjecture remains plausible. \ By contrast, Aaronson
et al.\ \cite{abfl} gave a correct proof that $\mathsf{NP}^{A}\not \subset
\mathsf{PDQP}^{A}$\ relative to a suitable oracle $A$.}

Aaronson et al.\ \cite{abfl} gave two main examples of the power of
non-collapsing measurements. \ First, we can use non-collapsing measurements
to find \textit{collisions} in any two-to-one function $f:\left[  N\right]
\rightarrow\left[  N\right]  $---that is, pairs\ $x,y$\ such that $f\left(
x\right)  =f\left(  y\right)  $---almost instantly. \ To do so, we first
prepare the state%
\[
\frac{1}{\sqrt{N}}\sum_{x=1}^{N}\left\vert x\right\rangle \left\vert f\left(
x\right)  \right\rangle .
\]
We then apply an ordinary collapsing measurement to the second\ register, to
produce $\frac{\left\vert x\right\rangle +\left\vert y\right\rangle }{\sqrt
{2}}$\ where $f\left(  x\right)  =f\left(  y\right)  $ in the first register.
\ Finally, we apply non-collapsing measurements to the first register to read
out both $x$\ and $y$. \ Generalizing this, Aaronson et al.\ \cite{abfl}%
\ showed that $\mathsf{SZK}\subseteq\mathsf{PDQP}$, where $\mathsf{SZK}$ is
the class of problems---including, for example, graph isomorphism and breaking
lattice-based cryptography---that admit so-called statistical zero-knowledge
proof protocols.

As a second example of the power of non-collapsing measurements, Aaronson et
al. \cite{abfl} showed that they let us solve the Grover problem---i.e., given
a black-box function $f:\left[  N\right]  \rightarrow\left\{  0,1\right\}  $,
find a \textquotedblleft marked item\textquotedblright\ $x$\ such that
$f\left(  x\right)  =1$---using only $\sim N^{1/3}$\ steps, as opposed to the
$\sim\sqrt{N}$\ steps needed by an ordinary quantum computer. \ To do this, we
first run $T\sim N^{1/3}$\ iterations of Grover's search algorithm, in order
to amplify the probability of the marked item up to $\sim\frac{T^{2}}{N}%
=\frac{1}{N^{1/3}}$. \ We then make $\sim N^{1/3}$\ non-collapsing
measurements of the resulting state, until (with high probability) the marked
item has been found.

Strikingly, though, and much like with quantum advice, $\mathsf{PDQP}$\ seems
to provide only \textquotedblleft slightly\textquotedblright\ more power than
ordinary quantum computing. \ Indeed, Aaronson et al.\ \cite{abfl} showed that
any $\mathsf{PDQP}$\ algorithm needs at least $\sim N^{1/4}$\ steps to do
Grover search, and as a consequence, that there exists an oracle relative
which $\mathsf{NP}\not \subset \mathsf{PDQP}$.\ \ In other words: in the
black-box setting, even non-collapsing measurements still wouldn't let quantum
computers solve $\mathsf{NP}$-complete problems in polynomial time.\bigskip

This note considers what happens when we \textit{combine} polynomial-size
quantum advice with non-collapsing measurements, to obtain the complexity
class $\mathsf{PDQP/qpoly}$. \ Surprisingly, and contrary to our initial
guess, we find that even though the two resources are fairly weak
individually, together they let us solve\textit{ everything}. \ That is:
$\mathsf{PDQP/qpoly}=\mathsf{ALL}$, where $\mathsf{ALL}$\ is the set of all
languages $L\subseteq\left\{  0,1\right\}  ^{\ast}$ (including the halting
problem and other noncomputable languages).

There are precedents for such a result in quantum complexity theory. \ Most
notably, in 2005, Raz \cite{raz:all}\ showed that $\mathsf{QIP}\left(
2\right)  \mathsf{/qpoly}=\mathsf{ALL}$, where $\mathsf{QIP}\left(  2\right)
$\ consists of all languages that have two-message quantum interactive proof
systems. \ His protocol, though different from ours, even used the exact same
quantum advice state that ours will: namely, a superposition over a low-degree
polynomial extension of the Boolean function $f:\left\{  0,1\right\}
^{n}\rightarrow\left\{  0,1\right\}  $\ that we want to evaluate.

More trivially, Aaronson \cite{aar:qmaqpoly} observed that
$\mathsf{PostBQP/qpoly}=\mathsf{ALL}$. \ This is simply because, for any
Boolean function $f$, if given the advice state%
\begin{equation}
\frac{1}{\sqrt{2^{n}}}\sum_{z\in\left\{  0,1\right\}  ^{n}}\left\vert
z\right\rangle \left\vert f\left(  z\right)  \right\rangle \tag{*}%
\end{equation}
as well as an input $x$, we can first measure in the computational basis and
then postselect on getting $z=x$. \ For similar reasons, we have
$\mathsf{PQP/qpoly}=\mathsf{ALL}$, where $\mathsf{PQP}=\mathsf{PP}$\ consists
of all languages that admit a polynomial-time quantum algorithm that guesses
the right answer with probability greater than $1/2$. \ Using error-correcting
codes, Aaronson \cite{aar:qmaqpoly} also observed that $\mathsf{QMA}%
_{\mathsf{EXP}}\mathsf{/qpoly}=\mathsf{ALL}$,\ where $\mathsf{QMA}%
_{\mathsf{EXP}}$\ is the exponential-time analogue of $\mathsf{QMA}%
$.\footnote{Note that, as pointed out in \cite{aar:qmaqpoly}, adding quantum
advice need not \textquotedblleft commute\textquotedblright\ with standard
complexity class inclusions. \ As an example, we have $\mathsf{PP}%
=\mathsf{PostBQP}\subseteq\mathsf{BQPSPACE}=\mathsf{PSPACE}$, yet
$\mathsf{PostBQP/qpoly}$\ contains all languages whereas
$\mathsf{BQPSPACE/qpoly}=\mathsf{PSPACE/poly}$\ does not.}

Compared to these earlier observations, we think the main novelty here is
simply that $\mathsf{PDQP}$\ is so much \textit{weaker} than $\mathsf{QIP}%
\left(  2\right)  $, $\mathsf{PostBQP}$, $\mathsf{PQP}$, or $\mathsf{QMA}%
_{\mathsf{EXP}}$. \ As we've seen, unlike those other classes, $\mathsf{PDQP}%
$\ is neither known nor believed to contain $\mathsf{NP}$. \ Intuitively, it's
just a \textquotedblleft slight generalization\textquotedblright\ of
$\mathsf{BQP}$ itself---which is what makes it perhaps unsettling that the
mere addition of quantum advice can unlock so much power.

Indeed, the fact that $\mathsf{PDQP/qpoly}=\mathsf{ALL}$\ could be said to
have a \textquotedblleft real-world\textquotedblright\ implication. \ In a
forthcoming work, on a practical scheme for generating cryptographically
secure random bits using quantum supremacy experiments, Aaronson
\cite{aar:certrand}\ found that, in order to derive the soundness of such a
scheme, he needed to assume (what seems plausible) the existence of
pseudorandom functions that are indistinguishable from random functions by any
$\mathsf{PDQP}$\ algorithm. \ He then noticed that an even stronger soundness
conclusion would follow, if he assumed the existence of pseudorandom functions
that are indistinguishable from random by any $\mathsf{PDQP/qpoly}%
$\ algorithm. \ Unfortunately, by the main result of this note, the latter
doesn't exist! \ This was the genesis of the present work: as ethereal as it
sounds, the result that $\mathsf{PDQP/qpoly}=\mathsf{ALL}$\ rules out a
natural approach to proving the soundness of randomness generation schemes
against adversaries with quantum advice.\bigskip

For completeness, let us now give a formal definition of $\mathsf{PDQP/qpoly}$.

\begin{definition}
A $\mathsf{PDQP}$\ circuit, acting on $m$ qubits, is just an ordinary quantum
circuit, which starts with the initial state $\left\vert 0\right\rangle
^{\otimes m}$; and can contain $1$- and $2$-qubit unitary gates from some
finite, computationally universal set (for example, $\operatorname{CNOT}%
$\ plus $\pi/8$\ rotations), as well as measurement gates, which measure a
qubit in the $\left\{  \left\vert 0\right\rangle ,\left\vert 1\right\rangle
\right\}  $\ basis, collapsing the qubit to $\left\vert 0\right\rangle $\ or
$\left\vert 1\right\rangle $\ in the usual way. \ In a given run of the
circuit, let $\left\vert \psi_{t}\right\rangle $ be the pure state of the $m$
qubits immediately after the $t^{th}$\ gate is applied (note that the
$\left\vert \psi_{t}\right\rangle $'s can be different in different runs,
because of the probabilistic measurement gates). \ Also, let $\mathcal{D}_{t}%
$\ be the distribution over $m$-bit strings obtained by measuring $\left\vert
\psi_{t}\right\rangle $\ in the computational basis. \ Then the
\textquotedblleft output\textquotedblright\ of a $T$-gate $\mathsf{PDQP}%
$\ circuit is a list of $m$-bit strings, $y_{1},\ldots,y_{T}$, where each
$y_{t}$\ was sampled from $\mathcal{D}_{t}$,\ independently of $y_{t^{\prime}%
}$\ for all $t^{\prime}\neq t$.

A $\mathsf{PDQP}$\ algorithm is a polynomial-time classical algorithm that,
given an input $x\in\left\{  0,1\right\}  ^{n}$, gets to specify a single
$\mathsf{PDQP}$\ circuit $C=C_{x}$, receive a single output $Y=\left\langle
y_{1},\ldots,y_{T}\right\rangle $ of $C$, and finally perform classical
postprocessing on $Y$ before either accepting or rejecting.

A $\mathsf{PDQP/qpoly}$\ algorithm is the same, except that it can also
include a list of pure states $\left\{  \left\vert \psi_{n}\right\rangle
\right\}  _{n\geq1}$, where $\left\vert \psi_{n}\right\rangle $ is on
$p\left(  n\right)  $ qubits for some polynomial $p$, such that when the input
$x$\ has length $n$, the initial state of $C_{x}$\ has the form $\left\vert
\psi_{n}\right\rangle \otimes\left\vert 0\cdots0\right\rangle $\ rather than
just $\left\vert 0\right\rangle ^{\otimes m}$.

$\mathsf{PDQP/qpoly}$ is the class of languages $L\subseteq\left\{
0,1\right\}  ^{\ast}$\ for which there exists a $\mathsf{PDQP/qpoly}%
$\ algorithm $A$ such that, for all $x\in\left\{  0,1\right\}  ^{\ast}$, if
$x\in L$ then $A\left(  x\right)  $\ accepts with probability at least $2/3$,
while if $x\notin L$ then $A\left(  x\right)  $\ accepts with probability at
most $1/3$.
\end{definition}

We can also let $\mathsf{PDQEXP/qpoly}$ be the same class as
$\mathsf{PDQP/qpoly}$, except that now the quantum algorithm can use
exponential time. \ Then as an easy warmup, we observe that
$\mathsf{PDQEXP/qpoly}=\mathsf{ALL}$. \ This is simply because, given the
advice state (*), as well as an input $x\in\left\{  0,1\right\}  ^{n}$, a
$\mathsf{PDQEXP}$\ algorithm can keep measuring in the computational basis,
over and over about $2^{n}$ times, until it happens to get the outcome
$\left\vert x\right\rangle \left\vert f\left(  x\right)  \right\rangle
$.\bigskip

We now prove this note's main (only) result.

\begin{theorem}
\label{main}$\mathsf{PDQP/qpoly}=\mathsf{ALL}$.
\end{theorem}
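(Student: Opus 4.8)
The plan is the following. Since $\mathsf{PDQP/qpoly}\subseteq\mathsf{ALL}$ holds trivially, it suffices to prove $\mathsf{ALL}\subseteq\mathsf{PDQP/qpoly}$; so fix a language $L$, and for each $n$ let $f=f_{n}:\{0,1\}^{n}\rightarrow\{0,1\}$ be its characteristic function on length-$n$ inputs. Pick a prime $q=q(n)$ with $n+2\le q\le 2n+1$ (which exists by Bertrand's postulate), identify $\{0,1\}\subseteq\mathbb{F}_{q}$, and let $\hat{f}:\mathbb{F}_{q}^{n}\rightarrow\mathbb{F}_{q}$ be the multilinear extension of $f$: the unique polynomial of individual degree $\le 1$ (hence total degree $\le n$) agreeing with $f$ on $\{0,1\}^{n}$. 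The quantum advice will be the $O(n\log n)$-qubit state
\[
|\psi_{n}\rangle:=\frac{1}{\sqrt{q^{n}}}\sum_{z\in\mathbb{F}_{q}^{n}}|z\rangle|\hat{f}(z)\rangle,
\]
i.e.\ the ``superposition over a low-degree polynomial extension of $f$'' that, as the introduction notes, also appears in Raz's protocol. On input $x\in\{0,1\}^{n}$, the circuit first applies the efficiently computable reversible map that replaces each basis state $|z\rangle$ with $z\ne x$ by $|v\rangle|t\rangle$, where $v$ is the canonical representative (scale the first nonzero coordinate to $1$) of the direction of the line through $x$ and $z$, and $t\in\mathbb{F}_{q}^{\times}$ is defined by $z=x+tv$; the lone remaining state $|x\rangle$ goes to a flag $|\bot\rangle$. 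Since the pairs $(v,t)$ range exactly over $\{\text{canonical directions}\}\times\mathbb{F}_{q}^{\times}$ and $\hat{f}(z)=\hat{f}(x+tv)$ under this relabelling, applying the map to the location register of $|\psi_{n}\rangle$ produces, up to the single $|\bot\rangle$ term of amplitude $q^{-n/2}$, the state $\tfrac{1}{\sqrt{q^{n}}}\sum_{v}\sum_{t\in\mathbb{F}_{q}^{\times}}|v\rangle|t\rangle|\hat{f}(x+tv)\rangle$.

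Now the circuit measures the qubits holding $v$. Each particular line through $x$ is seen with probability $(q-1)/q^{n}$, so with probability $1-q^{-n}$ the state collapses to
\[
\frac{1}{\sqrt{q-1}}\sum_{t\in\mathbb{F}_{q}^{\times}}|v_{0}\rangle|t\rangle|\hat{f}(x+tv_{0})\rangle
\]
for some direction $v_{0}$: an \emph{equal} superposition over the $q-1$ distinct computational-basis strings $(v_{0},t,\hat{f}(x+tv_{0}))$. All remaining gates are identities, so every later snapshot $|\psi_{t}\rangle$ of the run equals this state.

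By the definition of $\mathsf{PDQP}$, the algorithm therefore collects $\Theta(q\log q)=\mathrm{poly}(n)$ independent samples from the uniform distribution on those $q-1$ strings, together with $\mathrm{poly}(n)$ ``junk'' samples from the snapshots preceding the collapse; the junk is harmless, since those snapshots' $v$-registers are either garbage or near-uniform over exponentially many directions, so no spurious $v$-value recurs $n+1$ times, whereas $v_{0}$ recurs $\Omega(\mathrm{poly}(n))$ times. A coupon-collector estimate shows that with high probability the good samples exhibit $\hat{f}(x+tv_{0})$ for \emph{every} $t\in\mathbb{F}_{q}^{\times}$. Since $\phi(t):=\hat{f}(x+tv_{0})$ is a univariate polynomial in $t$ of degree $\le\deg\hat{f}\le n<q-1$, the classical postprocessing Lagrange-interpolates $\phi$ from its $q-1\ge n+1$ known values and outputs $\phi(0)=\hat{f}(x)=f(x)$. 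The overall success probability is $1-o(1)$, hence in particular at least $2/3$, so $L\in\mathsf{PDQP/qpoly}$, as desired.

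I expect the only genuinely delicate point --- and essentially the whole content of the theorem --- to be the decision, in the second paragraph, to collapse onto \emph{which line through $x$} the advice's location register falls on. Every more obvious way to mine $|\psi_{n}\rangle$ fails: measuring it directly, or collapsing onto the point $x$ itself, onto a prescribed line, or onto any other event of exponentially small density, either yields nothing but uniformly random evaluations of $\hat{f}$ (which never determine $\hat{f}(x)$) or succeeds with exponentially small probability --- and, unlike $\mathsf{PostBQP}$, $\mathsf{PDQP}$ turns an exponentially unlikely outcome into an answer only by spending exponentially many (non-collapsing) measurements. The point is that, although the lines through $x$ are an exponentially small fraction of all lines, once the location register is reparametrized around $x$ they carry a $1-o(1)$ fraction of the amplitude; collapsing that register onto its ``line component'' therefore succeeds with probability $1-o(1)$ and leaves precisely the polynomial-size equal superposition that non-collapsing measurements can read out and that the Reed--Muller local decoder --- restrict to a random line through the point, then interpolate --- needs.
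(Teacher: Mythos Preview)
Your proof is correct and is essentially the same argument as the paper's: the advice is the superposition over the multilinear extension of $f$, the collapsing measurement reveals the canonical direction of a random line through $x$ (your $v$, the paper's $R(z-x)$), and repeated non-collapsing measurements plus coupon-collector plus univariate interpolation recover $f(x)$. The only cosmetic difference is that you reparametrize the location register in place as $(v,t)$ while the paper computes $R(z-x)$ into an ancilla and leaves $z$ intact; your extra care about the pre-collapse ``junk'' snapshots is unnecessary since the time index of the collapsing gate is known, but it does no harm.
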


\begin{proof}
Fix $n$, and let $f:\left\{  0,1\right\}  ^{n}\rightarrow\left\{  0,1\right\}
$ be an arbitrary Boolean function. \ Then it suffices to describe a quantum
advice state $\left\vert \psi_{f}\right\rangle $, on $n^{O\left(  1\right)  }$
qubits, such that a polynomial-time quantum algorithm equipped with both
$\left\vert \psi_{f}\right\rangle $\ and non-collapsing measurements can
evaluate $f\left(  x\right)  $ on any input $x\in\left\{  0,1\right\}  ^{n}%
$\ of its choice.

Let $\mathbb{F}$\ be a finite field of some prime order $q\geq n+2$ (by
Bertrand's postulate, we can assume $q\leq2n+1$). \ Also, let $g:\mathbb{F}%
^{n}\rightarrow\mathbb{F}$\ be the unique multilinear extension of $f$: that
is, the multilinear polynomial such that $g\left(  x\right)  =f\left(
x\right)  $\ for all $x\in\left\{  0,1\right\}  ^{n}$. \ Then our advice state
will simply be%
\[
\left\vert \psi_{f}\right\rangle :=\frac{1}{\sqrt{q^{n}}}\sum_{z\in
\mathbb{F}^{n}}\left\vert z\right\rangle \left\vert g\left(  z\right)
\right\rangle .
\]
This is a state of $O\left(  n\log n\right)  $\ qubits.

Let $R:\mathbb{F}^{n}\rightarrow\mathbb{F}^{n}$ be the function that maps each
vector $y\in\mathbb{F}^{n}$ to the unique scalar multiple $\alpha y$\ of $y$
whose leftmost nonzero entry is a $1$, or to $0^{n}$\ if $y=0^{n}$. \ In other
words, $R\left(  y\right)  $\ is a canonical label for the \textit{ray} in
$\mathbb{F}^{n}$\ that $y$\ belongs to.

Our $\mathsf{PDQP}$\ algorithm is now the following. \ Given an input
$x\in\left\{  0,1\right\}  ^{n}$, first map $\left\vert \psi_{f}\right\rangle
$\ to%
\[
\frac{1}{\sqrt{q^{n}}}\sum_{z\in\mathbb{F}^{n}}\left\vert z\right\rangle
\left\vert g\left(  z\right)  \right\rangle \left\vert R\left(  z-x\right)
\right\rangle .
\]
Then measure the third\ register, $\left\vert R\left(  z-x\right)
\right\rangle $, via an ordinary collapsing measurement.

If the measurement outcome happens to be $0^{n}$, then we can immediately
learn $g\left(  x\right)  =f\left(  x\right)  $\ by simply measuring the
second register.

In the much more likely case that measuring $\left\vert R\left(  z-x\right)
\right\rangle $\ yielded a nonzero outcome, say $y$, the reduced state of the
first two registers is now%
\[
\left\vert \phi\right\rangle :=\frac{1}{\sqrt{q-1}}\sum_{j\in\mathbb{F}%
\setminus\left\{  0\right\}  }\left\vert x+jy\right\rangle \left\vert g\left(
x+jy\right)  \right\rangle .
\]
Define $p:\mathbb{F}\rightarrow\mathbb{F}$\ by $p\left(  j\right)  :=g\left(
x+jy\right)  $. \ Then notice that $p$ is a univariate polynomial in $j$ of
degree at most $n$, and furthermore that $p\left(  0\right)  =g\left(
x\right)  =f\left(  x\right)  $.

As the last step, we simply perform repeated non-collapsing measurements of
$\left\vert \phi\right\rangle $\ in the computational basis, until we have
learned the values of $p\left(  j\right)  $\ for every $j\in\mathbb{F}%
\setminus\left\{  0\right\}  $. \ This is an instance of the coupon
collector's problem, so with overwhelming probability it takes at most
$O\left(  q\log q\right)  =O\left(  n\log n\right)  $\ measurements. \ Then,
in the classical postprocessing phase, we perform polynomial interpolation on
the recovered $p\left(  j\right)  $\ values, in order to learn $p\left(
0\right)  =f\left(  x\right)  $.
\end{proof}

\bigskip We conclude with some miscellaneous remarks and open problems about
Theorem \ref{main}.

Notice that the proof of Theorem \ref{main} did not depend on quantum
mechanics in any essential way. \ In other words, let $\mathsf{PDPP}$\ be a
classical analogue of $\mathsf{PDQP}$, in which we can execute a
polynomial-time randomized algorithm, while performing both \textquotedblleft
collapsing\textquotedblright\ and \textquotedblleft
non-collapsing\textquotedblright\ measurements of the algorithm's current
probabilistic state. \ Also, let $\mathsf{PDPP/rpoly}$\ be $\mathsf{PDPP}%
$\ augmented with polynomial-size randomized advice. \ Then exactly the same
argument gives us%
\[
\mathsf{PDPP/rpoly}=\mathsf{ALL}.
\]
The previous results of Raz \cite{raz:all}\ and Aaronson \cite{aar:qmaqpoly},
about quantum advice boosting various quantum complexity classes to unlimited
power, can all similarly be \textquotedblleft de-quantized,\textquotedblright%
\ and stated in terms of randomized rather than quantum advice. \ That is,%
\[
\mathsf{IP}\left(  2\right)  \mathsf{/rpoly}=\mathsf{PostBPP/rpoly}%
=\mathsf{PP/rpoly}=\mathsf{MA}_{\mathsf{EXP}}\mathsf{/rpoly}=\mathsf{ALL}%
.\footnote{A word of caution, though: even though Goldwasser and Sipser
\cite{gs} showed that $\mathsf{IP}\left(  2\right)  =\mathsf{AM}$ (where
$\mathsf{AM}$\ denotes two-message, \textit{public-coin} interactive proof
systems), we do \textit{not} have $\mathsf{AM/rpoly}=\mathsf{ALL}$. \ Instead,
$\mathsf{AM/rpoly}=\mathsf{MA/rpoly}=\mathsf{NP/poly}$\ (see
\cite{aar:qmaqpoly}). \ This is an instance of the broader phenomenon that
adding randomized and quantum advice needn't commute with standard complexity
class containments.}%
\]
Indeed, the only reason to state these results in terms of quantum advice in
the first place, is that quantum advice has been a subject of independent
interest whereas randomized advice has not.

In 2006, Aaronson \cite{aar:qmaqpoly} raised the question of whether there's
\textit{any} natural quantum complexity class $\mathcal{C}$\ that quantum
advice boosts to $\mathsf{ALL}$, even though classical randomized advice fails
to do so. \ As far as we know that question remains open.\bigskip

The trick used to prove Theorem \ref{main} also has an implication for
communication complexity. \ Namely: suppose Alice has a string $x\in\left\{
0,1\right\}  ^{N}$, Bob has an index $i\in\left[  N\right]  $, and Alice wants
to send Bob a message that will enable him to learn $x_{i}$. \ For this
so-called Index problem, it's known that even any quantum protocol requires
Alice\ to send Bob at least $\sim N$\ qubits \cite{antv}. \ Nevertheless, we
claim that there's a protocol for this problem in which Alice sends Bob a
quantum state $\left\vert \psi_{x}\right\rangle $\ of only $O\left(  \log
N\log\log N\right)  $\ qubits, and then Bob learns $x_{i}$\ after making an
ordinary collapsing measurement of $\left\vert \psi_{x}\right\rangle
$\ followed by $O\left(  \log N\log\log N\right)  $\ non-collapsing
measurements. \ This protocol is exactly the one from Theorem \ref{main},
except with $x$ in place of the truth table of $f$, and $i$ in place of
$x$.\bigskip

Any reader familiar with \textit{Locally Decodable Codes} (LDCs) might
recognize them as the central concept in the proof of Theorem \ref{main}, even
though we kept the proof self-contained and never used the term. \ In general,
an error-correcting code is a function $C:\Sigma^{N}\rightarrow\Sigma^{M}$ for
some finite alphabet $\Sigma$, with the property that $C\left(  x\right)
$\ and $C\left(  y\right)  $\ differ on a large fraction of coordinates for
all $x\neq y$. \ An LDC is a special kind of error-correcting code: one such
that, for each entry $x_{i}$\ of the original string $x=x_{1}\ldots x_{N}$,
it's possible to recover $x_{i}$\ from any string $w$\ close to $C\left(
x\right)  $, with high probability, via a randomized algorithm that queries
$w$\ in only $r$\ randomly chosen locations. \ Here one wants $r$ to be as
small as possible, even a constant like $2$ or $3$.

In a sequence of breakthroughs (see, e.g.,
\cite{yekhanin,itohsuzuki,efremenko}), it was established that for every
constant $r=2^{t}$, there exist $r$-query LDCs with linear distance and with
size%
\[
M=\exp\exp\left(  \left(  \log N\right)  ^{1/t}\left(  \log\log N\right)
^{1-1/t}\right)  .
\]
For $r\geq4$,\footnote{Though $4$ is the smallest power of $2$ for which the
bound is nontrivial, with modified arguments one can also handle the case
$r=3$.} this size is less than exponential in $N$, albeit more than
polynomial. \ We didn't use these sophisticated LDCs, for a combination of
reasons: first, we were fine with $r=n^{O\left(  1\right)  }$\ queries, which
meant that a vastly simpler LDC, based on a multilinear extension of the
Boolean function $f$, could be used instead. \ Second, we were \textit{not}
fine with $\log M$, the number of qubits in the advice state, being more than
$\left(  \log N\right)  ^{O\left(  1\right)  }=n^{O\left(  1\right)  }$, as it
would be with the state-of-the-art constant-query LDCs.

One might ask whether, in the algorithm of Theorem \ref{main}, the number of
non-collapsing measurements could be reduced from $O\left(  n\log n\right)
$\ to a small constant $r$. \ A positive answer will follow if there turn out
to exist $\left(  r+1\right)  $-query LDCs of constant distance and at most
quasipolynomial size, which moreover are sufficiently explicit and efficient.

In this connection, it's interesting that Kerenidis and de Wolf \cite{kw}%
\ proved---as it happens, by using a quantum information argument---that there
are no $2$-query LDCs of subexponential size. \ This raises the possibility
that, in any algorithm like ours, there must be at least \textit{two}
non-collapsing measurements (as well as a third and final measurement, which
might as well be collapsing). \ This seems surprising: \textit{a priori}, one
might have guessed that a single non-collapsing measurement would already
provide all the computational power that can be had from such a
resource.\bigskip

The open problem that interests us the most in this subject is the following.
\ A central fact about $\mathsf{PDQP}$, shown by Aaronson et al.\ \cite{abfl},
is that it contains $\mathsf{SZK}$. \ While \cite{abfl}\ never made this
explicit, the same argument shows that $\mathsf{PDQP}$\ contains a larger
class that we could call $\mathsf{QCSZK}$ (Quantum Classical $\mathsf{SZK}$),
consisting of all languages that admit a statistical zero-knowledge proof
protocol with a quantum verifier but classical communication with the
prover.\footnote{This class has the following as a complete promise problem,
generalizing the $\mathsf{SZK}$-complete Statistical Difference problem of
Sahai and Vadhan \cite{sv}. \ Given as input two quantum circuits $C_{0}$\ and
$C_{1}$, which sample probability distributions $\mathcal{D}_{0}$\ and
$\mathcal{D}_{1}$ respectively over $n$-bit strings, decide whether
$\mathcal{D}_{0}$\ and $\mathcal{D}_{1}$\ have variation distance at most
$1/3$\ or at least $2/3$, promised that one of these is the case.} \ We thus
raise the following question: does $\mathsf{QCSZK/qpoly}$\ equal
$\mathsf{ALL}$? \ Or we might as well ask the analogous classical question:
does $\mathsf{SZK/rpoly}$\ equal $\mathsf{ALL}$? \ What about
$\mathsf{NISZK/rpoly}$\ (where $\mathsf{NISZK}$\ means Non-Interactive
$\mathsf{SZK}$)?

\section{Acknowledgments}

I thank Dana Moshkovitz for helpful conversations.

\bibliographystyle{plain}
\bibliography{thesis}

\end{document}